\algnewcommand{\LineComment}[1]{\State \(\triangleright\) #1}
\newcounter{stackCounter}
\newcommand{\reverseList}[1]{
  \let\reversedList\empty
  \foreach\x in {#1} {
    \ifx\reversedList\empty
      \xdef\reversedList{\x}%
    \else
      \xdef\reversedList{\x,\reversedList}%
    \fi
  }
}
\newcommand{\stack}[2][1]{%
  \tikzstyle{top}=[ultra thin, fill=brown!50]%
  \tikzstyle{bottom}=[ultra thin, fill=brown!58]%
  \pgfmathsetmacro\pancakeHeight{0.1}%
  \pgfmathsetmacro\pancakeWidthMin{0.2}%
  \pgfmathsetmacro\pancakeWidthStep{0.05}%
  \pgfmathsetmacro\pancakeThickness{0.025}%
  \pgfmathsetmacro\stackStep{0.1}%
  \begin{tikzpicture}%
    \begin{scope}[scale=#1]%
      \setcounter{stackCounter}{0}%
      \reverseList{#2}%
      \foreach \x in \reversedList {%
        \pgfmathsetmacro\yBottom{\thestackCounter * \stackStep}%
        \pgfmathsetmacro\yTop{\thestackCounter * \stackStep + \pancakeThickness}%
        \pgfmathabs{\x}%
        \pgfmathsetmacro\xWidth{\pancakeWidthMin + (\pgfmathresult-1) * \pancakeWidthStep}%
        \draw[style=bottom] (0,\yBottom) ellipse ({\xWidth} and {\pancakeHeight});
        \draw[style=top] (0,\yTop) ellipse ({\xWidth} and {\pancakeHeight});
        \stepcounter{stackCounter}%
      }%
    \end{scope}%
  \end{tikzpicture}%
}%
\newcounter{marbleCounter}
\newcommand{\AlgGreedy}[1]{\mathsf{Twisted}(#1)}
\newcommand{\AlgRulerU}[1]{\mathsf{Ruler}(#1)}
\newcommand{\AlgRulerS}[1]{\mathsf{Ruler\pm}(#1)}
\newcommand{\AlgGrayU}[1]{\mathsf{GrayCode}(#1)}
\newcommand{\AlgGrayS}[1]{\mathsf{GrayCode\pm}(#1)}
\newcommand{\VisitName}{\mathsf{visit}}
\newcommand{\Visit}[1]{\VisitName(#1)}
\newcommand{\Yield}[1]{\mathsf{yield}(#1)}
\definecolor{p1colPlus1}{rgb}{0.34, 0.62, 1.00}
\definecolor{p1colPlus2}{rgb}{0.63, 0.92, 0.55}
\definecolor{p1colPlus3}{rgb}{1.00, 0.89, 0.30}
\definecolor{p1colPlus4}{rgb}{1.00, 0.16, 0.16}
\definecolor{p1col+1}{rgb}{0.34, 0.62, 1.00}
\definecolor{p1col+2}{rgb}{0.63, 0.92, 0.55}
\definecolor{p1col+3}{rgb}{1.00, 0.89, 0.30}
\definecolor{p1col+4}{rgb}{1.00, 0.16, 0.16}
\definecolor{p1col1}{rgb}{0.27, 0.49, 0.80}
\definecolor{p1col2}{rgb}{0.51, 0.73, 0.44}
\definecolor{p1col3}{rgb}{0.86, 0.67, 0.24}
\definecolor{p1col4}{rgb}{0.86, 0.13, 0.13}
\definecolor{p1colMinus1}{rgb}{0.20, 0.36, 0.60}
\definecolor{p1colMinus2}{rgb}{0.39, 0.54, 0.33}
\definecolor{p1colMinus3}{rgb}{0.64, 0.45, 0.18}
\definecolor{p1colMinus4}{rgb}{0.64, 0.10, 0.10}
\definecolor{p1col-1}{rgb}{0.20, 0.36, 0.60}
\definecolor{p1col-2}{rgb}{0.39, 0.54, 0.33}
\definecolor{p1col-3}{rgb}{0.64, 0.45, 0.18}
\definecolor{p1col-4}{rgb}{0.64, 0.10, 0.10}
\definecolor{p2col1}{rgb}{0.264135, 0.201429, 0.745889}
\definecolor{p2col2}{rgb}{0.256326, 0.430921, 0.808553}
\definecolor{p2col3}{rgb}{0.324106, 0.60897, 0.708341}
\definecolor{p2col4}{rgb}{0.439128, 0.704968, 0.52925}
\definecolor{p2col5}{rgb}{0.597181, 0.742185, 0.36771}
\definecolor{p2col6}{rgb}{0.764712, 0.728302, 0.273608}
\definecolor{p2col7}{rgb}{0.88018, 0.631684, 0.227665}
\definecolor{p2col8}{rgb}{0.897354, 0.41824, 0.185007}
\definecolor{p2col9}{rgb}{0.857359, 0.131106, 0.132128}
\newcommand{\drawUnsigned}[1]{
  \xdef\symbolWidth{1}%
  \xdef\symbolHeightMin{0.2}%
  \xdef\symbolHeightInc{0.2}%
  \xdef\symbolSpacing{0.0}%

  \pgfmathparse{dim(#1)}%
  \xdef\totalSymbols{\pgfmathresult}%

  \begin{tikzpicture}%
    \begin{scope}[scale=0.2]%
      \pgfmathparse{int(\totalSymbols-1)}%
      \xdef\lastIndex{\pgfmathresult}
      \foreach \arrayIndex in {0,1,...,\lastIndex}{%
        \pgfmathparse{{#1}[\arrayIndex]}%
        \xdef\arraySymbol{\pgfmathresult}%
        \pgfmathparse{\symbolWidth*\arrayIndex+\symbolSpacing*(\arrayIndex-1)}%
        \xdef\xLeft{\pgfmathresult}%
        \pgfmathparse{\xLeft+\symbolWidth}%
        \xdef\xRight{\pgfmathresult}%
        \pgfmathparse{0}%
        \xdef\yBottom{\pgfmathresult}%
        \pgfmathparse{\yBottom+\symbolHeightMin+\symbolHeightInc*(\arraySymbol-1)}%
        \xdef\yTop{\pgfmathresult}%
        \draw[thin, fill=p1col\arraySymbol] (\xLeft,\yBottom) rectangle (\xRight,\yTop);%
      }%
    \end{scope}%
  \end{tikzpicture}%
}
\newcommand{\drawSigned}[1]{
  \xdef\symbolWidth{1}%
  \xdef\symbolHeightMin{0.4}%
  \xdef\symbolHeightInc{0.1}%
  \xdef\symbolSpacing{0.0}%

  \pgfmathparse{dim(#1)}%
  \xdef\totalSymbols{\pgfmathresult}%

  \begin{tikzpicture}%
    \begin{scope}[scale=0.2]%
      \pgfmathparse{int(\totalSymbols-1)}%
      \xdef\lastIndex{\pgfmathresult}
      \foreach \arrayIndex in {0,1,...,\lastIndex}{%
        \pgfmathparse{int({#1}[\arrayIndex])}%
        \xdef\signedSymbol{\pgfmathresult}%
        \pgfmathparse{int(abs(\signedSymbol))}%
        \xdef\unsignedSymbol{\pgfmathresult}%
        \pgfmathparse{\symbolWidth*\arrayIndex+\symbolSpacing*(\arrayIndex-1)}%
        \xdef\xLeft{\pgfmathresult}%
        \pgfmathparse{\xLeft+\symbolWidth}%
        \xdef\xRight{\pgfmathresult}%
        \pgfmathparse{0}%
        \xdef\yBottom{\pgfmathresult}%
        \ifnum \signedSymbol>0%
          \pgfmathparse{\yBottom+\symbolHeightMin+\symbolHeightInc*(\unsignedSymbol-1)}%
        \else%
          \pgfmathparse{(\yBottom+\symbolHeightMin+\symbolHeightInc*(\unsignedSymbol-1))}%
        \fi%
        \xdef\yTop{\pgfmathresult}%
        \draw[thin, fill=p1col\signedSymbol] (\xLeft,\yBottom) rectangle (\xRight,\yTop);%
      }%
    \end{scope}%
  \end{tikzpicture}%
}
\newcommand{\drawSignedPositions}[2]{
  \xdef\symbolWidth{1}%
  \xdef\symbolHeightMin{0.2}%
  \xdef\symbolHeightInc{0.2}%
  \xdef\symbolSpacing{0.0}%

  \pgfmathparse{dim(#1)}%
  \xdef\totalSymbols{\pgfmathresult}%

  \begin{tikzpicture}%
    \begin{scope}[scale=0.2]%
      \pgfmathparse{int(\totalSymbols-1)}%
      \xdef\lastIndex{\pgfmathresult}
      \foreach \arrayIndex in {0,1,...,\lastIndex}{%
        \pgfmathparse{{#1}[\arrayIndex]}%
        \xdef\arraySymbol{\pgfmathresult}%
        \pgfmathparse{\symbolWidth*\arrayIndex+\symbolSpacing*(\arrayIndex-1)}%
        \xdef\xLeft{\pgfmathresult}%
        \pgfmathparse{\xLeft+\symbolWidth}%
        \xdef\xRight{\pgfmathresult}%
        \pgfmathparse{0}%
        \xdef\yBottom{\pgfmathresult}%
        \pgfmathparse{{#2}[\arrayIndex]}%
        \xdef\bit{\pgfmathresult}%
        \ifnum \bit=0%
          \pgfmathparse{\yBottom+\symbolHeightMin+\symbolHeightInc*(\arraySymbol-1)}%
        \else%
          \pgfmathparse{-(\yBottom+\symbolHeightMin+\symbolHeightInc*(\arraySymbol-1))}%
        \fi%
        \xdef\yTop{\pgfmathresult}%
        \draw[thin, fill=p1col\arraySymbol] (\xLeft,\yBottom) rectangle (\xRight,\yTop);%
      }%
    \end{scope}%
  \end{tikzpicture}%
}
\newcommand{\BINARY}[1]{B_{#1}}
\newcommand{\PERMS}[1]{S_{#1}}
\newcommand{\SIGNED}[1]{S_{#1}^{\pm}}
\newcommand{\swapl}[1]{\overleftarrow{#1}}
\newcommand{\swapr}[1]{\overrightarrow{#1}}
\newcommand{\inc}[1]{\overline{#1}}
\newcommand{\dec}[1]{\underline{#1}}
\newcommand{\twtLeft}[1]{\overleftarrow{\mathsf{t}}_{\!\!#1}}
\newcommand{\twtRight}[1]{\overrightarrow{\mathsf{t}}_{\!\!#1}}
\newcommand{\twt}[1]{\mathsf{t}_{#1}}
\newcommand{\plain}[1]{\mathsf{plain}(#1)}
\newcommand{\twisted}[1]{\mathsf{twisted}(#1)}
\newcommand{\bigO}[1]{\mathcal{O}(#1)}
\newcommand{\rulerName}{\mathsf{ruler}}
\newcommand{\ruler}[1]{\rulerName(#1)}
\newcommand{\srulerName}{\mathsf{ruler\pm}}
\newcommand{\sruler}[1]{\srulerName(#1)}
\DeclareMathSymbol{\shortminus}{\mathbin}{AMSa}{"39}
\newcommand{\colorOne}[1]{\textbf{\textcolor{p1col1}{#1}}}
\newcommand{\colorTwo}[1]{\textbf{\textcolor{p1col2}{#1}}}
\newcommand{\colorThree}[1]{\textbf{\textcolor{p1col3}{#1}}}
\newcommand{\colorFour}[1]{\textbf{\textcolor{p1col4}{#1}}}
\newcommand{\colorMinusTwo}[1]{\textbf{\textcolor{p1col-2}{#1}}}
\newcommand{\colorMinusThree}[1]{\textbf{\textcolor{p1col-3}{#1}}}
\newcommand{\colorMinusFour}[1]{\textbf{\textcolor{p1col-4}{#1}}}
\newcommand{\colorThr}[1]{\colorThree{#1}}
\newcommand{\colorFou}[1]{\colorFour{#1}}
\begin{document}
%
\title{Generating Signed Permutations by \\ Twisting Two-Sided Ribbons}
%
\titlerunning{Generating Signed Permutations by Twisting Two-Sided Ribbons}
%

\author{Yuan (Friedrich) Qiu \inst{1} \and
Aaron Williams\inst{1}\orcidID{0000-0001-6816-4368}}
\authorrunning{Y. Qiu \and A. Williams}
%
\institute{Williams College, Williamstown MA 01267, USA \\
\url{https://csci.williams.edu/people/faculty/aaron-williams/} \\
\email{\{yq1, aaron.williams\}@williams.edu}}
%

%

\maketitle              
\begin{abstract}
We provide a simple and natural solution to the problem of generating all $2^n \cdot n!$ signed permutations of $[n] = \{1,2,\ldots,n\}$.
Our solution provides a pleasing generalization of the most famous ordering of permutations: plain changes (Steinhaus-Johnson-Trotter algorithm).
In plain changes, the $n!$ permutations of $[n]$ are ordered so that successive permutations differ by swapping a pair of adjacent symbols, and the order is often visualized as a weaving pattern involving $n$ ropes.
Here we model a signed permutation using $n$ ribbons with two distinct sides, and each successive configuration is created by twisting (i.e., swapping and turning over) two neighboring ribbons or a single ribbon.
By greedily prioritizing $2$-twists of the largest symbol before $1$-twists of the largest symbol, we create a signed version of plain change's memorable zig-zag pattern. 
We provide a loopless algorithm (i.e., worst-case $\bigO{1}$-time per object) by extending the well-known mixed-radix Gray code algorithm.

\keywords{plain changes \and signed permutations \and Gray codes \and greedy Gray code algorithm \and combinatorial generation \and loopless algorithms \and permutation languages}
\end{abstract}

\section{Introduction} 
\label{sec:intro}

This section discusses the problem of generating permutations, and our goal of creating a memorable solution to the problem of generating signed permutations.

\subsection{Generating Permutations}
\label{sec:intro_perms}

The generation of permutations is a classic problem that dates back to the dawn of computer science (and several hundred years earlier).
The goal is to create all $n!$ permutations of $[n] = \{1,2,\ldots,n\}$ as efficiently as possible.
A wide variety of approaches have been considered, some of which can be conceptualized using a specific physical model of the permutation.
We now consider three such examples.

Zaks' algorithm \cite{zaks1984new} can be conceptualized using a stack of $n$ pancakes of varying sizes.
Successive permutations are created by flipping some pancakes at the top of the stack, which is equivalent to a \emph{prefix-reversal} in the permutation.
For example, if \raisebox{-0.15em}{\stack[0.75]{1,2,3,4}} represents $1234$, then flipping the top three pancakes gives \raisebox{-0.15em}{\stack[0.75]{3,2,1,4}} or $\overleftrightarrow{123}4 = 3214$. 
Table \ref{tab:orders} shows the full order for $n=4$.
Zaks' designed his `new' order to have an efficient array-based implementation.
Unknown to Zaks, Kl\"{u}gel had discovered the order by 1796 \cite{hindenburg1796sammlung}; see \cite{cameron2023hamiltonicity} for further details.

Corbett's algorithm \cite{corbett1992rotator} can be conceptualized using $n$ marbles on a ramp.
Successive permutations are created by moving a marble to the top of the ramp, which is equivalent to a \emph{prefix-rotation} in the permutation.
For example, if \raisebox{-0.15em}{\includegraphics[height=1.25em]{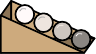}} represents $1234$, then moving the fourth marble gives \raisebox{-0.15em}{\includegraphics[height=1.25em]{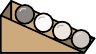}} or $\overleftarrow{1234} = 4123$. 

The algorithms by Zaks and Corbett are reasonably well-known, and have their own specific applications.
For example, in interconnection networks \cite{duato2003interconnection}, the algorithms provide Hamiltion cycles in the pancake network and rotator network, respectively.
However, neither can lay claim to being \underline{the} permutation algorithm.

\emph{Plain changes} can be conceptualized using $n$ ropes running in parallel.
Successive permutations are obtained by crossing one rope over a neighboring rope, as in weaving, which is equivalent to a \emph{swap} (or \emph{adjacent-transposition}) in the permutation.
For example, if \raisebox{0em}{\includegraphics[height=0.75em]{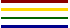}} represents $1234$, then swapping the middle two ropes gives \raisebox{0em}{\includegraphics[height=0.75em]{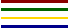}} or $1\overleftrightarrow{23}4 = 1324$.
Plain changes dates back to bell-ringers in the 1600s \cite{duckworth1668tintinnalogia,stedman1677campanalogia}.
Figure \ref{fig:plain4} shows the full order for $n=4$, including its familiar zig-zag pattern.
It is commonly known as the \emph{Steinhaus-Johnson-Trotter algorithm} \cite{steinhaus1979one} \cite{johnson1963generation} \cite{trotter1962algorithm} due to repeated rediscoveries circa 1960.

Many other notable approaches to permutation generation exist, with surveys by Sedgewick \cite{sedgewick1977permutation}, Savage \cite{savage1997survey}, and M\"{u}tze \cite{mutze2022combinatorial}, frameworks by Knuth \cite{knuth2013art} and Ganapathi and Chowdhury \cite{Ganapathi23}.
While some methods have specific advantages \cite{holroyd2012shorthand} or require less additional memory when implemented \cite{liptak23constant}, there is little doubt that plain changes is \underline{the} solution to permutation generation.

\subsection{Generating Signed Permutations}
\label{sec:intro_signed}

A \emph{signed permutation} of $[n]$ is a permutation of $[n]$ in which every symbol is given a $\pm$ sign.
We let $\PERMS{n}$ and $\SIGNED{n}$ be the sets of all permutations and signed permutations of $[n]$, respectively.
Note that $|\PERMS{n}| = n!$ and $|\SIGNED{n}| = 2^n \cdot n!$.
For example, $231 \in \PERMS{3}$ has eight different signings, including ${+}2{-}3{-}1 \in \SIGNED{3}$.
For convenience, we sometimes use overlines for negative symbols, with $2\overline{3}\overline{1}$ denoting ${+}2{-}3{-}1$.
Signed permutations arise in many natural settings, including genomics \cite{fertin2009combinatorics}.

While the permutation generation problem has a widely accepted best answer, the same is not true for signed permutations.
A solution by Korsh, LaFollette, and Lipschutz \cite{korsh2011loopless} achieves the optimal running-time (see Section \ref{sec:combgen}), however, it operates on the signed permutation as a composite object.
In other words, it either modifies the underlying permutation, or the signs, but not both at the same time.
More specifically, their solution either swaps two symbols (and preserves their signs) or changes the rightmost symbol's sign.
An alternate solution by Suzuki, Sawada, and Kaneko \cite{suzuki2005hamilton} treats signed permutations as stacks of $n$ \emph{burnt pancakes}, and can be understood as a signed version of Zaks' algorithm.


\subsubsection{Physical Model: Ribbons}
\label{sec:intro_signed_physical}

In keeping with the previously discussed permutation generation algorithms, we will provide a physical model for signed permutations.
A \emph{two-sided ribbon} is glossy on one side and matte on the other%
\footnote{Manufacturers refer to this type of ribbon as \emph{single face} as only one side is polished.},
and we consider $n$ such ribbons running in parallel.
We modify the ribbons by \emph{twisting} some number of neighboring ribbons.
More specifically, a \emph{$k$-twist} turns over $k$ consecutive ribbons and reverses their order, as shown in Figure \ref{fig:operations}.
In other contexts, a twist is a \emph{complementing substring reversal}, or simply a \emph{reversal} \cite{HannenhalliP95}.

\begin{figure}[h]
    \vspace{-1.5em}
    \centering
    \begin{subfigure}{0.485\textwidth}
        \centering
        \begin{tabular}{rcr}
        1 &
        \multirow{4}{*}{\raisebox{0.5em}{\includegraphics[height=0.5in]{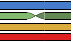}}} &
        \colorOne{1} \\[-0.2em]
        \colorTwo{2} && \colorMinusTwo{-2} \\[-0.2em]
        \colorThree{3} && \colorThree{3} \\[-0.2em]
        \colorFour{4} && \colorFour{4} \\[-0.2em]
        \end{tabular}
        \caption{The $1$-twist changes $1 \, 2 \, 3 \, 4$ into $1 \, \overline{2} \, 3 \, 4$.}
        \label{fig:operations_1twist}
    \end{subfigure}
    \hfill
    \begin{subfigure}{0.485\textwidth}
        \centering
        \begin{tabular}{rcr}
        1 &
        \multirow{4}{*}{\raisebox{0.5em}{\includegraphics[height=0.5in]{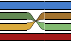}}} &
        \colorOne{1} \\[-0.2em]
        \colorMinusTwo{-2} && \colorMinusThree{-3} \\[-0.2em]
        \colorThree{3} && \colorTwo{2} \\[-0.2em]
        \colorMinusFour{-4} && \colorMinusFour{4} \\[-0.2em]
        \end{tabular}
        \caption{The $2$-twist changes $1 \, \overline{2} \, 3 \, \overline{4}$ into $1 \, \overline{3} \, 2 \, \overline{4}$.}
        \label{fig:operations_2twist}
    \end{subfigure}
    \vspace{-0.5em}
    \caption{Two-sided ribbons with distinct positive (i.e., glossy) and negative (i.e., matte) sides running in parallel.
    A $k$-twist reverses the order of $k$ neighboring ribbons and turns each of them over, as shown for (a) $k=1$ and (b) $k=2$.}
    \label{fig:operations}
\end{figure}


Our goal is to create a \emph{twist Gray code} for signed permutations.
This means that successive entries of $\SIGNED{n}$ are always created by applying a single twist, or equivalently, a sequence of $2^n n! - 1$ twists generate each entry of $\SIGNED{n}$ in turn.
It should be obvious that $1$-twists are insufficient for this task on their own, as they do not modify the underlying permutation.
Similarly, $2$-twists are insufficient on their own, as they do not modify the number of positive symbols modulo two.
However, we will show that $2$-twists and $1$-twists are sufficient when used together.
Moreover, we provide a simple solution using only these shortest twists.

\subsection{Outline}
\label{sec:intro_outline}

Section~\ref{sec:combgen} provides background on combinatorial generation, including greedy Gray codes.
Section~\ref{sec:signedPlain} presents our twist Gray code, which can be understood as greedily prioritizing $2$-twists of the largest possible symbol followed by $1$-twists of the the largest possible symbol.
We name our order \emph{twisted plain changes}, and we propose it to be \underline{the} solution to the signed permutation generation problem.
Section~\ref{sec:ruler} discusses ruler sequences, and Section~\ref{sec:loopless} provides a loopless implementation of our algorithm.
A Python implementation of our loopless algorithm appears in the appendix.

\section{Combinatorial Generation}
\label{sec:combgen}



As Ruskey explains in his \emph{Combinatorial Generation} manuscript \cite{ruskey2003combinatorial}, humans have been making exhaustive lists of various kinds for thousands of years, and more recently, they have been programming computers to complete the task.
The latter pursuit requires the creation of a suitable order of the objects to be generated, and the design of an algorithm to efficiently generate the objects in the desired order.
In this section we review some basic concepts and terminology, and then discuss two specific foundational results.

\subsection{Gray Codes and Loopless Algorithms}
\label{sec:combgen_Gray}

If successive objects in an order differ in a constant amount (by a natural metric), then the order can be called a \emph{Gray code}.
If an algorithm generates successive objects in $\bigO{1}$-time in an amortized or worst-case sense, then it runs in \emph{constant amortized time} (CAT) or is \emph{loopless} (i.e., there are no internal loops) \cite{Ehrlich73}, respectively.
To make sense of these definitions, note that a well-written generation algorithm shares one copy of an object with its applications.
The generation algorithm then repeatedly modifies the object and informs the application that the `next' object is available to visit, without returning the full object.
This \emph{shared object model} allows the generation algorithm to create successive objects in constant time, and loopless algorithms require a Gray code ordering.

For example, Zaks' order can be generated by a CAT algorithm when the permutation is stored in an array.
This is due to the fact that the prefix-reversals that are performed by the algorithm have a constant length on average (see Ord-Smith's earlier \textsc{EconoPerm} implementation \cite{ord1970generation}).
However, a loopless implementation is not possible, since prefix-reversals of length $n$ (i.e., flipping the entire stack) takes $\Theta(n)$-time%
\footnote{If the permutation is stored in a BLL, then a loopless implementation is possible~\cite{williams20101}.}.
On the other hand, plain changes can be implemented by a loopless algorithm, as we will see in Sections \ref{sec:ruler}--\ref{sec:loopless}.

\subsection{The Greedy Gray Code Algorithm}
\label{sec:combgen_greedy}

Historically, most Gray codes have been constructed recursively.
In other words, the order of larger objects is based on the order of smaller objects.
We prefer a more recent approach that has provided simpler interpretations of previous constructions, and the discovery of new orders (including twisted plain changes).

The \emph{greedy Gray code algorithm} \cite{williams2013greedy} attempts to create a Gray code one object at time, based on a starting object, and a prioritized list of modifications.
At each step, it tries to extend the order to a new object by applying a modification to the most recently created object.
If this is possible, then the highest priority modification is used.
All three permutation algorithms discussed in Section \ref{sec:intro_perms} have simple greedy descriptions \cite{williams2013greedy}.
Furthermore, the aforementioned burnt pancake order of signed permutations \cite{suzuki2005hamilton} has a greedy interpretation:
flip the fewest burnt pancakes (or equivalently, apply the shortest \emph{prefix twist}) \cite{sawada2016greedy}.

The greedy approach is clarified using the ``original'' Gray code in Section~\ref{sec:combgen_BRGC}.




\subsection{Binary Reflected Gray Code}
\label{sec:combgen_BRGC}

Amongst all results in combinatorial generation, plain change's stature is rivaled only by the \emph{binary reflected Gray code}%
\footnote{The eponymous \emph{Gray code} by Gray \cite{gray1953pulse} also demonstrates Stigler's law \cite{stigler1980stigler}: \cite{gardner1972curious}~\cite{heath1972origins}.}.
The BRGC orders $n$-bit binary strings by \emph{bit-flips}, meaning successive strings differ in one bit.
It is generated starting at the all-$0$s string by greedily flipping the rightmost possible bit \cite{williams2013greedy}.
For example, the order for $n=4$ begins as follows, where overlines denote the bit that is flipped to create the next binary string,
\begin{equation} \label{eq:BRGC4}
000\overline{0}, 00\overline{0}1, 001\overline{1}, 0010, \ldots.
\end{equation}
To continue \eqref{eq:BRGC4} we consider bit-flips in $0010$ from right to left.
We can't flip the right bit since $001\overline{0} = 0011$ is already in the list.
Similarly, $00\overline{1}0 = 0000$ is also in the list.
However, $0\overline{0}10 = 0110$ is not in the list, so this becomes the next string.

The full BRGC order for $n=4$ is visualized in Figure \ref{fig:brgc4} using two-sided ribbons, where each bit-flip is a $1$-twist of the corresponding ribbon.
Note that the order is \emph{cyclic}, as the last and first strings differ by flipping the first bit.

\begin{figure}[h]
    \centering
    \footnotesize
    \includegraphics[angle=270,width=\textwidth]{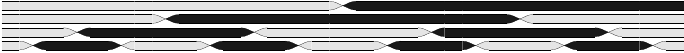}
    \begin{tabular}{@{}*{15}{c@{\hspace{1.95em}}}c@{}}
0 & 0 & 0 & 0 & 0 & 0 & 0 & 0 & 1 & 1 & 1 & 1 & 1 & 1 & 1 & 1 \\[-0.25em]
0 & 0 & 0 & 0 & 1 & 1 & 1 & 1 & 1 & 1 & 1 & 1 & 0 & 0 & 0 & 0 \\[-0.25em]
0 & 0 & 1 & 1 & 1 & 1 & 0 & 0 & 0 & 0 & 1 & 1 & 1 & 1 & 0 & 0 \\[-0.25em]
0 & 1 & 1 & 0 & 0 & 1 & 1 & 0 & 0 & 1 & 1 & 0 & 0 & 1 & 1 & 0 \\[-0.5em]
    \end{tabular}
    \caption{Binary reflected Gray code using indistinct two-sided~ribbons for $n=4$.
    }
    \label{fig:brgc4}
\end{figure}

\subsection{Plain Changes}
\label{sec:combgen_plain}

Plain changes can be defined greedily: start at $12 \cdots n$ and then swap the largest possible symbol \cite{williams2013greedy}.
This may seem underspecified---should we swap a symbol to the left or the right?---but the swapped symbol is always in the leftmost or rightmost position, or the opposite swap recreates a previous permutation.

The order $\plain{n}$ for $n=4$ is visualized in Figure \ref{fig:plain4} using distinct one-sided ribbons%
\footnote{Physically, a ribbon moves above or below its neighbor, but that is not relevant~here.}. 
The characteristic zig-zag pattern of the largest symbol \colorFour{4} should be apparent.
The order is \emph{cyclic} by swapping the first two symbols.

\begin{figure}[h]
    \centering
     \footnotesize
    \includegraphics[angle=270,width=\textwidth]{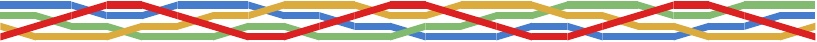}
    \begin{tabular}{@{}*{23}{c@{\hspace{1.03em}}}c@{}}
\colorOne{1} & \colorOne{1} & \colorOne{1} & \colorFou{4} & \colorFou{4} & \colorOne{1} & \colorOne{1} & \colorOne{1} & \colorThr{3} & \colorThr{3} & \colorThr{3} & \colorFou{4} & \colorFou{4} & \colorThr{3} & \colorThr{3} & \colorThr{3} & \colorTwo{2} & \colorTwo{2} & \colorTwo{2} & \colorFou{4} & \colorFou{4} & \colorTwo{2} & \colorTwo{2} & \colorTwo{2} \\[-0.25em]
\colorTwo{2} & \colorTwo{2} & \colorFou{4} & \colorOne{1} & \colorOne{1} & \colorFou{4} & \colorThr{3} & \colorThr{3} & \colorOne{1} & \colorOne{1} & \colorFou{4} & \colorThr{3} & \colorThr{3} & \colorFou{4} & \colorTwo{2} & \colorTwo{2} & \colorThr{3} & \colorThr{3} & \colorFou{4} & \colorTwo{2} & \colorTwo{2} & \colorFou{4} & \colorOne{1} & \colorOne{1} \\[-0.25em]
\colorThr{3} & \colorFou{4} & \colorTwo{2} & \colorTwo{2} & \colorThr{3} & \colorThr{3} & \colorFou{4} & \colorTwo{2} & \colorTwo{2} & \colorFou{4} & \colorOne{1} & \colorOne{1} & \colorTwo{2} & \colorTwo{2} & \colorFou{4} & \colorOne{1} & \colorOne{1} & \colorFou{4} & \colorThr{3} & \colorThr{3} & \colorOne{1} & \colorOne{1} & \colorFou{4} & \colorThr{3} \\[-0.25em]
\colorFou{4} & \colorThr{3} & \colorThr{3} & \colorThr{3} & \colorTwo{2} & \colorTwo{2} & \colorTwo{2} & \colorFou{4} & \colorFou{4} & \colorTwo{2} & \colorTwo{2} & \colorTwo{2} & \colorOne{1} & \colorOne{1} & \colorOne{1} & \colorFou{4} & \colorFou{4} & \colorOne{1} & \colorOne{1} & \colorOne{1} & \colorThr{3} & \colorThr{3} & \colorThr{3} & \colorFou{4} \\[-0.5em]
    \end{tabular}
    \caption{Plain changes $\plain{n}$ using distinct one-sided ribbons for $n=4$.
    }
    \label{fig:plain4}
\end{figure}


It may be tempting to dismiss plain changes as trivial mathematics. 
To put this urge to rest, consider the following anecdote.
In 1964, Ron Graham and A. J. Goldstein attempted to generate permutations by swaps, unaware of plain changes, and their result \cite{goldstein1964sequential} is nowhere near as elegant%
\footnote{An internal Bell Labs report \cite{goldstein1964computer} contains an implementation of \cite{goldstein1964sequential}, but the authors have not been able to obtain a copy of it.}.
In other words, plain changes isn't obvious, even if you are a brilliant mathematician searching for~it.


\section{Twisted Plain Changes}
\label{sec:signedPlain}

Now we present our greedy solution to generating signed permutations in Definition \ref{def:twistedPlainChanges}.
In the remainder of this section we will prove that the approach does indeed generate a twist Gray code for $\SIGNED{n}$.

\begin{definition} \label{def:twistedPlainChanges}
\emph{Twisted plain changes} $\twisted{n}$ starts at ${+}1 \, {+}2 \, \cdots \, {+}n \in \SIGNED{n}$ then greedily $2$-twists the largest possible value and if none extends the list then it $1$-twists the largest possible value.
More precisely, it is the result of Algorithm~\ref{alg:greedy}.
\end{definition}


\begin{algorithm}[h]
\centering
\caption{Greedy algorithm for generating twisted plain changes~$\twisted{n}$.}
\label{alg:greedy}
\begin{algorithmic}[1]
\Procedure{$\AlgGreedy{n}$}{} \Comment{Signed permutations are visited in $\twisted{n}$ order}
\State $T \leftarrow \twtLeft{n}, \twtRight{n}, \ldots, \twtLeft{2}, \twtRight{2}, \twtRight{1}, \twtLeft{1}, \; \twt{n}, \ldots, \twt{2}, \twt{1}$ \Comment{List of $3n$ different twists}
\State $i \leftarrow 1$ \Comment{1-based index into $T$ (e.g., $T[1]$ is $\twtLeft{n}$)}
\State $\pi \leftarrow {+}1 \ {+}2 \ \cdots \ {+}n$ \Comment{Starting signed permutation $\pi \in \SIGNED{n}$}
\State $\Visit{\pi}$ \Comment{Visit $\pi$ for the first and only time}
\State $S = \{\pi\}$ \Comment{Add $\pi$ to the visited set}
\While{$i \leq 3n$} \Comment{Index $i$ iterates through the $3n$ different twists}
  \State $\pi' \leftarrow T[i](\pi)$ \Comment{Apply the $i^{\text{th}}$ highest priority twist to create $\pi'$}
  \If{$\pi' \notin S$} \Comment{Check if $\pi'$ is a new signed permutation}
    \State $\pi \leftarrow \pi'$ \Comment{Update the current signed permutation $\pi$}
    \State $\Visit{\pi}$ \Comment{Visit $\pi$ for the first and only time}
    \State $S = S \cup \{\pi\}$ \Comment{Add $\pi$ to the visited set}
    \State $i \leftarrow 1$ \Comment{Reset the 1-based index into $T$}
  \EndIf
\EndWhile
\EndProcedure
\end{algorithmic}
\end{algorithm}

Figure~\ref{fig:twist4} shows the start of $\twisted{n}$ for $n=4$.
Note that the first $24$ entries are obtained by $2$-twists.
The result is a familiar zig-zag pattern, but with every ribbon turning over during each pass.
The $25$th entry is obtained by a $1$-twist.





\begin{figure}[h]
    \includegraphics[angle=270,width=\textwidth-2em]{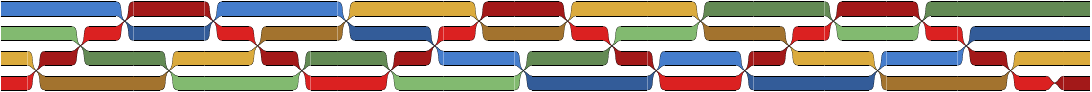} \raisebox{-1.5em}{{\large${\cdot}{\cdot}{\cdot}$}}
    \begin{tabular}{@{}*{23}{r@{\hspace{0.52em}}}r@{\hspace{0.52em}}|@{}r@{}}
\colorOne{$1$} & \colorOne{$1$} & \colorOne{$1$} & \colorFou{$\shortminus4$} & \colorFou{$\shortminus4$} & \colorOne{$1$} & \colorOne{$1$} & \colorOne{$1$} & \colorThr{$3$} & \colorThr{$3$} & \colorThr{$3$} & \colorFou{$\shortminus4$} & \colorFou{$\shortminus4$} & \colorThr{$3$} & \colorThr{$3$} & \colorThr{$3$} & \colorTwo{$\shortminus2$} & \colorTwo{$\shortminus2$} & \colorTwo{$\shortminus2$} & \colorFou{$\shortminus4$} & \colorFou{$\shortminus4$} & \colorTwo{$\shortminus2$} & \colorTwo{$\shortminus2$} & \colorTwo{$\shortminus2$} & \colorTwo{$\shortminus2$} \\[-0.20em]
\colorTwo{$2$} & \colorTwo{$2$} & \colorFou{$4$} & \colorOne{$\shortminus1$} & \colorOne{$\shortminus1$} & \colorFou{$4$} & \colorThr{$\shortminus3$} & \colorThr{$\shortminus3$} & \colorOne{$\shortminus1$} & \colorOne{$\shortminus1$} & \colorFou{$4$} & \colorThr{$\shortminus3$} & \colorThr{$\shortminus3$} & \colorFou{$4$} & \colorTwo{$2$} & \colorTwo{$2$} & \colorThr{$\shortminus3$} & \colorThr{$\shortminus3$} & \colorFou{$4$} & \colorTwo{$2$} & \colorTwo{$2$} & \colorFou{$4$} & \colorOne{$\shortminus1$} & \colorOne{$\shortminus1$} & \colorOne{$\shortminus1$} \\[-0.20em]
\colorThr{$3$} & \colorFou{$\shortminus4$} & \colorTwo{$\shortminus2$} & \colorTwo{$\shortminus2$} & \colorThr{$3$} & \colorThr{$3$} & \colorFou{$\shortminus4$} & \colorTwo{$\shortminus2$} & \colorTwo{$\shortminus2$} & \colorFou{$\shortminus4$} & \colorOne{$1$} & \colorOne{$1$} & \colorTwo{$\shortminus2$} & \colorTwo{$\shortminus2$} & \colorFou{$\shortminus4$} & \colorOne{$1$} & \colorOne{$1$} & \colorFou{$\shortminus4$} & \colorThr{$3$} & \colorThr{$3$} & \colorOne{$1$} & \colorOne{$1$} & \colorFou{$\shortminus4$} & \colorThr{$3$} & \colorThr{$3$} \\[-0.20em]
\colorFou{$4$} & \colorThr{$\shortminus3$} & \colorThr{$\shortminus3$} & \colorThr{$\shortminus3$} & \colorTwo{$2$} & \colorTwo{$2$} & \colorTwo{$2$} & \colorFou{$4$} & \colorFou{$4$} & \colorTwo{$2$} & \colorTwo{$2$} & \colorTwo{$2$} & \colorOne{$\shortminus1$} & \colorOne{$\shortminus1$} & \colorOne{$\shortminus1$} & \colorFou{$4$} & \colorFou{$4$} & \colorOne{$\shortminus1$} & \colorOne{$\shortminus1$} & \colorOne{$\shortminus1$} & \colorThr{$\shortminus3$} & \colorThr{$\shortminus3$} & \colorThr{$\shortminus3$} & \colorFou{$4$} & \colorFou{$\shortminus4$} \\[-0.20em]
    \end{tabular}
    \caption{Twisted plain changes $\twisted{n}$ for $n=4$ up to its $25$th entry.}
    \label{fig:twist4}
\end{figure}

\subsection{$2$-Twisted Permutohedron}
\label{sec:signedPlain_permutohedron}

At this point it is helpful to compare the start of plain changes and twisted plain changes.
The \emph{permutohedron of order $n$} is a graph whose vertices are permutations $\PERMS{n}$ and whose edges join two permutations that differ by a swap.
Plain changes traces a Hamilton path in this graph, as illustrated in Figure \ref{fig:permuto4_plain}.

Now consider signing each vertex $p_1 p_2 \cdots p_n$ in the permutohedron as~follows:
\begin{equation} \label{eq:2twistedPermuto}
p_j = i \text{ is positive if and only if } i \equiv j \mod 2.
\end{equation}
In particular, the permutation $12 \cdots n$ is signed as ${+}1{+}2 \cdots {+}n$ due to the fact that odd values are in odd positions, and even values are in even positions. 
One way of interpreting \eqref{eq:2twistedPermuto} is that swapping a symbol changes its sign.
Thus, after this signing, the edges in the resulting graph model $2$-twists instead of swaps.
For this reason, we refer to the graph as a \emph{$2$-twisted permutohedron of order $n$}.

Since twisted plain changes prioritizes $2$-twists before $1$-twists, the reader should be able to conclude that $\twisted{n}$ starts by creating a Hamilton path in the $2$-twisted permutohedron.
This is illustrated in Figure \ref{fig:permuto4_twisted}.

\begin{figure}
    \centering
    \begin{subfigure}{0.49\textwidth}
        \includegraphics[width=0.8\textwidth]{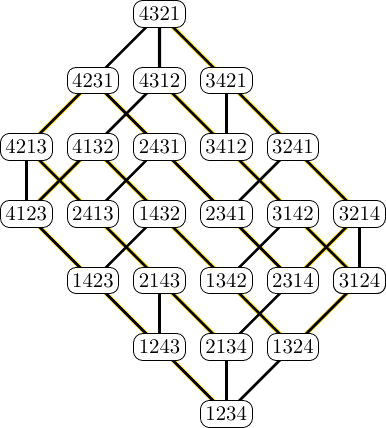}
        \caption{The Hamilton path from $1234 \in \PERMS{n}$ follows plain changes $\plain{4}$.}
        \label{fig:permuto4_plain}
    \end{subfigure}
    \hfill
    \begin{subfigure}{0.49\textwidth}
        \includegraphics[width=0.8\textwidth]{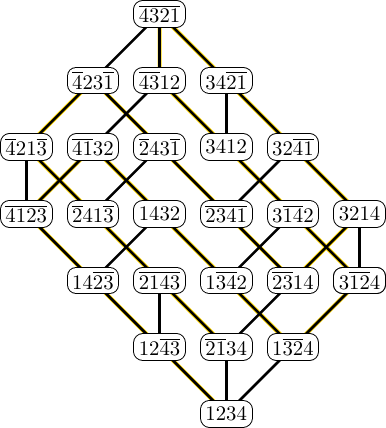}
        \caption{The Hamilton path from $1234 \in \SIGNED{n}$ follows twisted plain changes $\twisted{4}$.}
        \label{fig:permuto4_twisted}
    \end{subfigure}
    \caption{The permutohedron and the $2$-twisted permutohedron for $n=4$.}
    \label{fig:permuto4}
\end{figure}

\subsection{Global Structure}
\label{sec:signedPlain_global}

Our greedy approach can be verified to work for small $n$. 
To prove that it works for all $n$ we'll need to specify the global structure of the order that is created.

\begin{theorem} \label{thm:main}
Algorithm \ref{alg:greedy} creates a twist Gray code for signed permutations which starts at ${+}1 {+}2 {+}3 \cdots {+}n \in \SIGNED{n}$ and prioritizes $2$-twists of the largest value followed by $1$-twists of the largest value.
The order is cyclic as the last signed permutation is ${-}1 {+}2 {+}3 \cdots {+}n \in \SIGNED{n}$.
\end{theorem}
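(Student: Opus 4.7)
The plan is to partition $\SIGNED{n}$ into $2^n$ classes on which $2$-twists behave like ordinary swaps, exhaust each class by a plain-changes Hamilton path, and link consecutive classes by $1$-twists that form a binary reflected Gray code on the classes. To begin, I would introduce for each signed permutation $\pi$ a parity-signing invariant $q(\pi) \in \{0,1\}^n$ defined by
\[ q_i(\pi) \;=\; s_i(\pi) + p_i(\pi) + (i \bmod 2) \pmod{2}, \]
where $s_i \in \{0,1\}$ is the sign bit of the value $i$ (with $0$ for positive and $1$ for negative) and $p_i \in \{0,1\}$ is the parity of its position in $\pi$. A direct computation shows that a $2$-twist preserves $q$---because each of the two twisted symbols has both its sign and its position parity flipped---while a $1$-twist on value $i$ flips only the coordinate $q_i$. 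The level sets of $q$ therefore partition $\SIGNED{n}$ into $2^n$ classes of $n!$ signed permutations each (write $\mathcal{C}_q$ for the class of invariant $q$), and ``forget the signs'' restricts to a bijection between $\mathcal{C}_q$ and $\PERMS{n}$; the case $q = 0^n$ is precisely the $2$-twisted permutohedron of Section~\ref{sec:signedPlain_permutohedron}.

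Next I would observe that under this projection a $2$-twist becomes an adjacent swap and the greedy priority ``largest value'' is preserved, so Algorithm~\ref{alg:greedy} restricted to $2$-twists inside a class $\mathcal{C}_q$ simulates greedy plain changes on $\PERMS{n}$ starting from the current underlying permutation. Since $\plain{n}$ is a Hamilton cycle of the permutohedron, I would argue that greedy from either of the two cycle-endpoint vertices $12\cdots n$ or $2134\cdots n$ traverses the entire cycle in a single direction: at each step either the highest-priority $n$-swap is the next cycle edge, or it has already been blocked by a previous visit and the next-priority available swap is the remaining cycle edge. Hence each block of $\twisted{n}$ contributes exactly $n!$ consecutive signed permutations and terminates at a signed form of $12\cdots n$ or $2134\cdots n$, after which no $2$-twist extends the list.

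At that point Algorithm~\ref{alg:greedy} falls through to $1$-twists. Because a $1$-twist leaves the underlying permutation fixed and flips only $q_i$, its target is unvisited iff $\mathcal{C}_{q \oplus e_i}$ has not yet been entered, so greedy selects the largest $i$ with this property. That is exactly the greedy rule producing the binary reflected Gray code of Section~\ref{sec:combgen_BRGC} (identifying the coordinate $q_i$ with the $i$-th bit so ``largest value'' matches ``rightmost bit''), which visits all $2^n$ classes in $2^n - 1$ flips and terminates at $q = (1, 0, \ldots, 0)$. Combined with the previous paragraph this yields $2^n \cdot n!$ distinct signed permutations and no wasted step. For the final entry, the underlying permutations at block boundaries alternate between $12\cdots n$ and $2134\cdots n$; since $2^n$ is even for every $n \geq 1$, the final block exits at $12\cdots n$, and solving the invariant equation with $q_1 = 1$ and $q_i = 0$ for $i \geq 2$ forces $s_1 = 1$ and $s_i = 0$ for $i \geq 2$, so the last signed permutation is $-1\,{+}2\,{+}3\cdots{+}n$, differing from the start by a single $1$-twist on value~$1$.

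The hardest step will be the Hamilton-cycle claim of paragraph two, specifically the worry that greedy started at a mid-cycle vertex might leave the plain-changes cycle via some chord of the permutohedron. The resolution is that blocks are only ever entered at $12\cdots n$ or $2134\cdots n$---a fact that must itself be carried through an induction on block number using the alternating exit pattern above---and from either of these two cycle endpoints a careful induction on $n$ following the standard recursive description of $\plain{n}$ shows that the greedy rule is forced onto the Hamilton cycle at every vertex it encounters. Once this structural invariant is in place, the rest of the argument reduces to bookkeeping on the BRGC of $q$-values.
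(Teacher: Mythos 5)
Your proposal is correct and takes essentially the same route as the paper's proof: blocks of $n!$ signed permutations generated by greedy $2$-twists that project to plain changes on the underlying permutation, entered alternately at signed copies of $12\cdots n$ and $2134\cdots n$, linked by $1$-twists whose sequence of sign-classes follows the binary reflected Gray code, and terminating at ${-}1\,{+}2\,{+}3\cdots{+}n$, one $1$-twist from the start. Your invariant $q$ and your explicit handling of the reversed blocks (greedy from $2134\cdots n$) simply make precise what the paper asserts informally via its global-structure figure.
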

\begin{proof}
Since $2$-twists are prioritized before $1$-twists, the algorithm proceeds in the same manner as plain changes, except for the signs of the visited objects.
As a result, it generates sequences of $n!$ signed permutations using $2$-twists until a single $1$-twist is required.
One caveat is that the first signed permutation in a sequence alternates between having the underlying permutation of $1 2 3 4 \cdots n$ or $2 1 3 4 \cdots n$.
This is due to the fact that plain changes starts at $1 2 3 4 \cdots n$ and ends at $2 1 3 4 \cdots n$ and swaps $12$ to $21$ one time.
As a result, $12$ will be inverted while traversing every second sequence of length $n!$.
More specifically, the order generated by the algorithm appears in Figure \ref{fig:global}, with an example in Figure \ref{fig:path4}.
\qed


\end{proof}

\begin{figure}[h!]
\noindent
\small
\begin{tabular}{cccccccc}
$1 \, 2 \, 3 \cdots n{-}2 \, n{-}1 \, n$ & $\rightarrow$ & $1 \, 2 \, 3 \cdots n{-}2 \, \overline{n} \, \overline{n{-}1}$ & $\rightarrow$ & $\ldots$ & $\rightarrow$ & $\overline{2} \, \overline{1} \, 3 \cdots n{-}2 \, n{-}1 \, n$ & \\
& & & & & & $\downarrow$ & \\
$1 \, 2 \, 3 \cdots n{-}2 \, n{-}1 \, \overline{n}$ & $\leftarrow$ & $\cdots$ & $\leftarrow$ &  $\overline{2} \, \overline{1} \, 3 \cdots n{-}2 \, n \, \overline{n{-}1}$ & $\leftarrow$ & $\overline{2} \, \overline{1} \, 3 \cdots n{-}2 \, n{-}1 \, \overline{n}$ & \\
$\downarrow$ & & & & & & & \\
$1 \, 2 \, 3 \cdots n{-}2 \, \overline{n{-}1} \, \overline{n}$ & $\rightarrow$ & $1 \, 2 \, 3 \cdots n{-}2 \, n \, n{-}1$ & $\rightarrow$ & $\cdots$ & $\rightarrow$ & $\overline{2} \, \overline{1} \, 3 \cdots n{-}2 \, \overline{n{-}1} \, \overline{n}$ & \\
& & & & & & $\downarrow$ & \\
$1 \, 2 \, 3 \cdots n{-}2 \, \overline{n{-}1} \, n$ & $\leftarrow$ & $\cdots$ & $\leftarrow$ & $\overline{2} \, \overline{1} \, 3 \cdots n{-}2 \, \overline{n} \, n{-}1$ & $\leftarrow$ & $\overline{2} \, \overline{1} \, 3 \cdots n{-}2 \, \overline{n{-}1} \, n$ & \\
$\downarrow$ & & & & & & & \\
$1 \, 2 \, 3 \cdots \overline{n{-}2} \, \overline{n{-}1} \, n$ & $\rightarrow$ & $1 \, 2 \, 3 \cdots \overline{n{-}2} \, \overline{n} \, n{-}1$ & $\rightarrow$ & $\cdots$ & $\rightarrow$ & $\overline{2} \, \overline{1} \, 3 \cdots \overline{n{-}2} \, \overline{n{-}1} \, n$ & \\
& & & & & & $\downarrow$ & \\
& & & & & & $\vdots$ & \\
& & & & & & $\downarrow$ & \\
$\overline{1} \, 2 \, 3 \cdots n{-}2 \, n{-}1 \, n$ & $\leftarrow$ & $\cdots$ & $\leftarrow$  & $2 \, \overline{1} \, 3 \cdots n{-}2 \, \overline{n} \, \overline{n{-}1}$ & $\leftarrow$ & $2 \, \overline{1} \, 3 \cdots n{-}2 \, n{-}1 \, n$ & \\
\end{tabular}
\normalsize
\caption{The global structure of twisted plain changes.
Each row greedily applies $2$-twists to the largest possible symbol, thus following plain changes.
At the end of a row, no $2$-twist can be applied, and the down arrows greedily $1$-twist the largest possible symbol.
The rows alternate left-to-right and right-to-left (i.e., in boustrophedon order).
The leftmost column contains $1 2 \cdots n$ signed according to successive strings in the binary reflected Gray code.
The overall order is cyclic as a $1$-twist on value $1$ transforms the last entry into the first.}
\label{fig:global}
\end{figure}

\begin{figure}
    \centering
    \includegraphics[width=\textwidth]{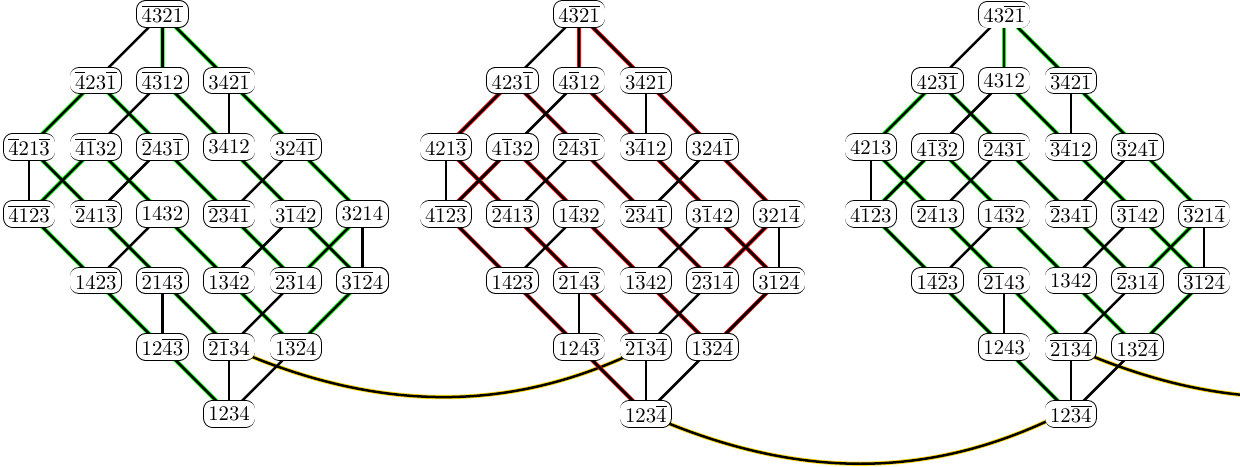}
    \caption{Illustrating the start of our twisted plain changes for $n=4$ traversing through the underlying flip graph.
    The straight edges denote every possible $2$-twists amongst the vertices shown, while the curved edges are the $1$-twists used amongst the vertices shown.
    The highlighted edges are used by the greedy algorithm, with green denoting subpaths starting at a signed $1 2 3 4 \cdots n$ vertex and red subpaths starting at signed $2 1 3 4 \cdots n$ vertex.}
    \label{fig:path4}
\end{figure}


The global structure described by Theorem \ref{thm:main} can be used as the foundation for a CAT algorithm.
We'll aim higher by developing a loopless algorithm across Sections \ref{sec:ruler}--\ref{sec:loopless}.
Besides offering optimal run-time performance (in a theoretical sense), the loopless algorithms help integrate our result with existing results.


\section{Ruler Sequences}
\label{sec:ruler}

In this section, we consider two general integer sequences, and how they relate to the Gray codes discussed up to this point.
As we'll see, these sequences help us understand our new greedy algorithm from Section \ref{sec:signedPlain}.
This work will also be central to the loopless generation of $\twisted{n}$ in Section~\ref{sec:loopless}.
The results in this section are illustrated in Table \ref{tab:orders}.

\subsection{Unsigned Sequences}
\label{sec:ruler_unsigned}

The \emph{ruler sequence} with bases $b_1, b_2, \ldots, b_n$ can be inductively defined as follows, where commas join sequences, and exponentiation denotes repetition.
\begin{align}
\ruler{b_1} &= 1^{b_1-1} = \overbrace{1, 1, \ldots, 1}^{b_1-1 \text{ copies}} \\
\ruler{b_1, b_2, \ldots, b_{i}} &= (\ruler{b_1, b_2, \ldots, b_{i-1}}, i)^{b_i-1}, \ruler{b_1, b_2, \ldots, b_{i-1}}
\end{align}
In other words, each successive base $b_i$ causes the pattern to be repeated $b_i$ times, with individual copies of $i$ interspersed between them.
Its name comes from the tick marks on rulers and tape measures, which follow the same pattern as the \emph{decimal ruler sequence}, $\ruler{10, 10, \ldots, 10}$.


The binary ruler sequence $\ruler{2,2,\ldots,2}$ gives the number of bits that change when counting in decimal.
More precisely, the sequence gives the length of the suffix of the form $011 \cdots 1$ that rolls over to $100 \cdots 0$ to create the next binary string.
For example, $\ruler{2,2,2} = 1,2,1,3,1,2,1$ which matches the lengths of suffixes that are complemented below.
\begin{equation}
00\overline{0}, 0\overline{01}, 01\overline{0},\overline{011}, 10\overline{0}, 1\overline{01}, 11\overline{0}, 111.
\end{equation}
Likewise, the \emph{upstairs factorial ruler sequence} $\ruler{1,2,\ldots,n}$ and \emph{downstairs factorial ruler sequence} $\ruler{n,n{-}1,\ldots,1}$ structure counting in the two most common factorial bases. 

\subsubsection{Gray Codes with Unsigned Ruler Change Sequences}
\label{sec:ruler_unsigned_Gray}

Ruler sequences also structure the changes many well-known Gray codes.
For example, the binary reflected Gray code complements individual bits according to the binary ruler function.
In other words, our standard numerical order of $n$-bit binary strings, and the binary reflected Gray code, have the same change sequence, but it is interpreted in different ways, with the former complementing entire suffixes and the latter complementing only the first bit of the suffix.
Similarly, the upstairs factorial ruler sequence provides the flip lengths in Zaks' order.
Corbett's order uses the downstairs factorial ruler sequence but with prefix rotation lengths of $n, 2, n{-}1, 3, \ldots, \lfloor\frac{n}{2}\rfloor$.


\subsection{Signed Sequences}
\label{sec:ruler_signed}

As discussed, the binary reflected Gray code uses the ruler sequence as its change sequence.
More specifically, the $j$th bit is complemented for each entry $j$ in the ruler sequence.
Suppose that we want a more precise change sequence: should the $j$th bit be incremented from $0$ to $1$, or decremented from $1$ to $0$?
Towards this goal, we introduce a signed version of the ruler sequence, in which positive and negative entries will denote increments and decrements.


The \emph{signed ruler sequence} is defined inductively as follows.
\begin{align}
\sruler{b_1} &= 1^{b_1-1} = \overbrace{1, 1, \ldots, 1}^{b_1-1 \text{ copies}} \\
\sruler{b_1, b_2, \ldots, b_{i}} &= 
    \begin{cases} 
        \label{eq:srulerInductive} 
        (s,i,\overline{s}^R,i)^{(b_i-1)/2},s & \text{if $b_i$ is odd} \\
        (s,i,\overline{s}^R,i)^{(b_i-2)/2},s,i,\overline{s} & \text{if $b_i$ is even}
    \end{cases} 
\end{align}
where $s = \sruler{b_1, b_2, \ldots, b_{i-1}}$, the overlines complement the sign of each entry, the $^{R}$ reverses the sequence, and the sequence ends with $s$ or $\overline{s}^R$ depending on the parity of $i$.
In other words, each successive base $b_i$ causes the pattern to be repeated $b_i$ times, with individual copies of $i$ interspersed between them, and every second copy being both complemented and reversed.
(It's worth noting that the unsigned ruler sequence is always a palindrome, so reversing every second recursive copy would not change the result.)


\subsubsection{Gray Codes with Signed Ruler Change Sequences}
\label{sec:ruler_signed_Gray}


The signed ruler sequence is helpful for properly understanding plain changes, as positive and negative entries correspond to swapping symbols in different directions.
More specifically, an sequence entry of ${-}j$ means move symbol $n-j+1$ to the left, and an entry of ${+}j$ means move symbol 

\begin{table}
\centering
\begin{tabular}{ccc|cc|ccc}
$\srulerName$ & BRGC & Binary & $\rulerName$ & Zaks' & $\srulerName$ & plain & downstairs \\ 
$(2,2,2,2)$ & $b_4 b_3 b_2 b_1$ & $b_4 b_3 b_2 b_1$ & $(1,2,3,4)$ & $p_1 p_2 p_3 p_4$ & $(4,3,2,1)$ & changes & Gray code \\ \hline
${+}1$ & $000\inc{0}$ & $000\overline{0}$  & $2$ & $\overleftrightarrow{12}34$ & ${+}1$ & $12\swapl{34}$ & $\inc{0}00$  \\
${+}2$ & $00\inc{0}1$ & $00\overline{01}$  & $3$ & $\overleftrightarrow{213}4$ & ${+}1$ & $1\swapl{24}3$ & $\inc{1}00$  \\
${-}1$ & $001\dec{1}$ & $001\overline{0}$  & $2$ & $\overleftrightarrow{31}24$ & ${+}1$ & $\swapl{14}23$ & $\inc{2}00$  \\
${+}3$ & $0\inc{0}10$ & $0\overline{011}$  & $3$ & $\overleftrightarrow{132}4$ & ${+}2$ & $41\swapl{23}$ & $3\inc{0}0$  \\
${+}1$ & $011\inc{0}$ & $010\overline{0}$  & $2$ & $\overleftrightarrow{23}14$ & ${-}1$ & $\swapr{41}32$ & $\dec{3}10$  \\
${-}2$ & $01\dec{1}1$ & $01\overline{01}$  & $4$ & $\overleftrightarrow{3214}$ & ${-}1$ & $1\swapr{43}2$ & $\dec{2}10$  \\
${-}1$ & $010\dec{1}$ & $011\overline{0}$  & $2$ & $\overleftrightarrow{41}23$ & ${-}1$ & $13\swapr{42}$ & $\dec{1}10$  \\
${+}4$ & $\inc{0}100$ & $\overline{0111}$  & $3$ & $\overleftrightarrow{142}3$ & ${+}2$ & $\swapl{13}24$ & $0\inc{1}0$  \\
${+}1$ & $110\inc{0}$ & $100\overline{0}$  & $2$ & $\overleftrightarrow{24}13$ & ${+}1$ & $31\swapl{24}$ & $\inc{0}20$  \\
${+}2$ & $11\inc{0}1$ & $10\overline{01}$  & $3$ & $\overleftrightarrow{421}3$ & ${+}1$ & $3\swapl{14}2$ & $\inc{1}20$  \\
${-}1$ & $111\dec{0}$ & $101\overline{0}$  & $2$ & $\overleftrightarrow{12}43$ & ${+}1$ & $\swapl{34}12$ & $\inc{2}20$  \\
${-}3$ & $1\dec{1}10$ & $1\overline{011}$  & $4$ & $\overleftrightarrow{2143}$ & ${+}3$ & $43\swapl{12}$ & $32\inc{0}$  \\
${+}1$ & $101\inc{0}$ & $110\overline{0}$  & $2$ & $\overleftrightarrow{34}12$ & ${-}1$ & $\swapr{43}21$ & $\dec{3}21$  \\
${-}2$ & $10\dec{1}1$ & $10\overline{01}$  & $3$ & $\overleftrightarrow{431}2$ & ${-}1$ & $3\swapr{42}1$ & $\dec{2}21$  \\
${-}1$ & $100\dec{1}$ & $101\overline{0}$  & $2$ & $\overleftrightarrow{13}42$ & ${-}1$ & $32\swapr{41}$ & $\dec{1}21$  \\
       &              &         & $3$ & $\overleftrightarrow{314}2$ & ${-}2$ & $\swapr{32}14$ & $0\dec{2}1$  \\
       &              &         & $2$ & $\overleftrightarrow{41}32$ & ${+}1$ & $23\swapl{14}$ & $\inc{0}11$  \\
       &              &         & $4$ & $\overleftrightarrow{1432}$ & ${+}1$ & $2\swapl{34}1$ & $\inc{1}11$  \\
       &              &         & $2$ & $\overleftrightarrow{23}41$ & ${+}1$ & $\swapl{24}31$ & $\inc{2}11$  \\
       &              &         & $3$ & $\overleftrightarrow{324}1$ & ${-}2$ & $42\swapr{31}$ & $3\dec{1}1$  \\
       &              &         & $2$ & $\overleftrightarrow{42}31$ & ${-}1$ & $\swapr{42}13$ & $\dec{3}01$  \\
       &              &         & $3$ & $\overleftrightarrow{243}1$ & ${-}1$ & $2\swapr{41}3$ & $\dec{2}01$  \\
       &              &         & $2$ & $\overleftrightarrow{34}21$ & ${-}1$ & $21\swapr{43}$ & $\dec{1}01$  \\
       &              &         &     & $4321$                      &        & $2134$         & $001$  \\
\end{tabular}
\caption{
Understanding counting and classic Gray codes using (un)signed ruler sequences.
Left: The binary ruler sequence is the change sequence for the binary reflected Gray code (BRGC) with the bit to complement given by the sequence value and its direction of change given by the sign.  
It is also the change sequence for standard counting in binary with the unsigned sequence value providing the length of suffix to complement.
Middle: The unsigned upstairs ruler sequence provides the prefix-reversal lengths (i.e., flip lengths) for Zaks' Gray code of permutations.
Right: The signed downstairs ruler sequence provides the symbol and direction to swap in plain changes.
It is also the change sequence for the corresponding reflected Gray code for downstairs strings (which are the inversion vectors for plain changes).
Our twisted plain change Gray code is generated in a similar manner using the signed factorial ruler sequence $\sruler{n,n{-}1,\ldots,2,1, \; 2,2,\ldots,2}$ with changes in the factorial portion of the base indicating $2$-twists, and changes in the binary portion of the base indicating $1$-twists.}
\label{tab:orders}
\end{table}

\section{Loopless Algorithms}
\label{sec:loopless}

While the greedy algorithm that generates $\twisted{n}$ from Section \ref{sec:combgen_greedy} is easy to describe, it is also highly inefficient.
More specifically, it requires exponential space, since all previously created objects must be remembered.
In this section, we show that the order can be generated efficiently without remembering what has previously been generated.
More specifically, our history-free implementation is loopless, and it uses the sequence-based framework described in Section \ref{sec:ruler} to generate each successive change to apply.

\subsection{Loopless Ruler Sequences}
\label{sec:loopless_ruler}

Algorithm \ref{alg:ruler} provides loopless algorithms for generating the (un)signed ruler sequences for any base.
The pseudocode is adapted from Knuth's loopless reflected mixed-radix Gray code Algorithm M \cite{knuth2013art}.

\begin{algorithm}[h]
\centering
\caption{Loopless algorithms for generating the (un)signed ruler sequence for bases $\mathbf{b} = b_1, b_2, \ldots, b_{n}$.
The $\mathbf{a}$ values store a mixed-radix word: $0 \leq a_i < b_i$.
The signed version generates the reflected mixed-radix Gray code in $\mathbf{a}$ with the $\mathbf{d}$ values providing $\pm 1$ directions of change. 
Focus pointers are stored in $\mathbf{f}$.
For example, if $\mathbf{b} = 3,2$ then $\AlgRulerS{\mathbf{b}}$ yields $+1,+1,2,-1,-1$ while creating the mixed-radix words $00, 10, 20, 21, 11, 10$ in $\mathbf{a}$.
Note: $b_i \geq 2$ for all $0 \leq i < n$.}
\label{alg:ruler}
\begin{minipage}[t]{0.45\textwidth}
\begin{algorithmic}[1]
\Procedure{$\AlgRulerU{\mathbf{b}}$}{}
\State $a_{1} \ a_{2} \ \cdots \ a_{n} \hspace{0.9em} \leftarrow 0 \ 0 \ \cdots \ 0$ \label{line:inita}
\State $f_{1} \ f_{2} \ \cdots \ f_{n+1} \leftarrow 1 \ 2 \ \cdots \ n{+}1$ \label{line:initf}
\State 
\While{$f_1 \leq n$}
  \State $j \leftarrow f_1$
  \State $f_1 \leftarrow 1$
  \State $a_j \leftarrow a_j + 1$
  \State $\Yield{j}$ \label{line:yield}
  \If{$a_j == b_j - 1$} 
    \State $d_j \leftarrow -d_j$
    \State $f_j \leftarrow f_{j+1}$
    \State $f_{j+1} \leftarrow j+1$
  \EndIf 
\EndWhile
\EndProcedure
\end{algorithmic}
\end{minipage}
\begin{minipage}[t]{0.45\textwidth}
\begin{algorithmic}[1]
\Procedure{$\AlgRulerS{\mathbf{b}}$}{}
\State $a_{1} \ a_{2} \ \cdots \ a_{n} \hspace{0.9em} \leftarrow 0 \ 0 \ \cdots \ 0$
\State $f_{1} \ f_{2} \ \cdots \ f_{n+1} \leftarrow 1 \ 2 \ \cdots \ n{+}1$
\State $d_{1} \ d_{2} \ \cdots \ d_{n} \hspace{0.9em} \leftarrow 1 \ 1 \ \cdots \ 1$
\While{$f_1 \leq n$}
  \State $j \leftarrow f_1$
  \State $f_1 \leftarrow 1$
  \State $a_j \leftarrow a_j + d_j$ 
  \State $\Yield{d_j \cdot j}$ 
  \If{$a_j \in \{0, b_j - 1\}$} 
    \State $a_j \leftarrow 0$
    \State $f_j \leftarrow f_{j+1}$
    \State $f_{j+1} \leftarrow j+1$
  \EndIf 
\EndWhile
\EndProcedure
\end{algorithmic}
\end{minipage}
\end{algorithm}

\subsection{Ruler-Based Gray Codes}
\label{sec:loopless_Gray}

The routines in Algorithm \ref{alg:Gray} are enhanced versions of those in Algorithm \ref{alg:ruler}.
More specifically, they have two additional inputs, $\mathbf{s}$ and $\mathbf{fns}$, which allow for a Gray code of objects to be created alongside the (un)signed ruler sequences.
To generate our twisted plain changes, we need to set the parameters as follows:
\begin{itemize}
    \item The bases $\mathbf{b}$ have length $2n$ are are equal to $n,n{-}1,\ldots,1, \ 2,2,\ldots,2$.
    Observe that this will generate a ruler sequence of length $n! \cdot 2^n - 1$ as desired.
    \item The starting object $\mathbf{s}$ is ${+}1 \; {+}2 \; \cdots \; {+}n$.
    \item The $\mathbf{fns}$ should be indexed using $1,2,\ldots,2n$ and ${-}1,{-}2,\ldots,{-}2n$.
    In particular, the $\mathbf{fns}$ at indices $\pm1, \pm2, \ldots, \pm n$ should modify the object by $2$-twisting successively smaller symbols to the left (positive) or right (negative),
    while the $\mathbf{fns}$ with larger absolute values should $1$-twist progressively smaller values (regardless of the sign).
\end{itemize}

\begin{algorithm}[h]
\centering
\caption{Algorithms for generating Gray codes using the (un)signed ruler sequence.
The $\mathbf{fns}$ modify object $\mathbf{s}$ and are indexed by the sequence.
For example, if $\mathbf{b} = 3,2$ then $\AlgGrayS{\mathbf{b}}$ creates the Gray code by starting with 
$\mathbf{s}$ and then modifies it using the $\mathbf{fns}$ with indices $+1,+1,2,-1,-1$.
The overall algorithm is loopless if each function runs in worst-case $\bigO{1}$-time.}
\label{alg:Gray}
\begin{minipage}[t]{0.45\textwidth}
\begin{algorithmic}[1]
\Procedure{$\AlgGrayU{\mathbf{b}, \mathbf{s}, \mathbf{fns}}$}{}
\State $a_{1} \ a_{2} \ \cdots \ a_{n} \hspace{0.9em} \leftarrow 0 \ 0 \ \cdots \ 0$ 
\State $f_{1} \ f_{2} \ \cdots \ f_{n+1} \leftarrow 1 \ 2 \ \cdots \ n{+}1$ 
\State 
\State $\Yield{\mathbf{s}}$
\While{$f_1 \leq n$}
  \State $j \leftarrow f_1$
  \State $f_1 \leftarrow 1$
  \State $a_j \leftarrow a_j + 1$
  \State $\mathbf{s} \leftarrow \mathbf{fns}[j]$
  \State $\Yield{j, \mathbf{s}}$
  \If{$a_j == b_j - 1$} 
    \State $d_j \leftarrow -d_j$
    \State $f_j \leftarrow f_{j+1}$
    \State $f_{j+1} \leftarrow j+1$
  \EndIf 
\EndWhile
\EndProcedure
\end{algorithmic}
\end{minipage}
\begin{minipage}[t]{0.45\textwidth}
\begin{algorithmic}[1]
\Procedure{$\AlgGrayS{\mathbf{b}, \mathbf{s}, \mathbf{fns}}$}{}
\State $a_{1} \ a_{2} \ \cdots \ a_{n} \hspace{0.9em} \leftarrow 0 \ 0 \ \cdots \ 0$
\State $f_{1} \ f_{2} \ \cdots \ f_{n+1} \leftarrow 1 \ 2 \ \cdots \ n{+}1$
\State $d_{1} \ d_{2} \ \cdots \ d_{n} \hspace{0.9em} \leftarrow 1 \ 1 \ \cdots \ 1$
\State $\Yield{\mathbf{s}}$
\While{$f_1 \leq n$}
  \State $j \leftarrow f_1$
  \State $f_1 \leftarrow 1$
  \State $a_j \leftarrow a_j + d_j$ 
  \State $\mathbf{s} \leftarrow \mathbf{fns}[d_j \cdot j]$
  \State $\Yield{d_j \cdot j, \mathbf{s}}$
  \If{$a_j \in \{0, b_j - 1\}$} 
    \State $a_j \leftarrow 0$
    \State $f_j \leftarrow f_{j+1}$
    \State $f_{j+1} \leftarrow j+1$
  \EndIf 
\EndWhile
\EndProcedure
\end{algorithmic}
\end{minipage}
\end{algorithm}

\begin{theorem} \label{thm:loopless}
Twisted plain changes $\twisted{n}$ is generated by a loopless algorithm by adapting the standard mixed-radix ruler sequence generation algorithm.
\end{theorem}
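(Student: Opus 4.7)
The plan is to combine three ingredients: the global structure of $\twisted{n}$ established by Theorem~\ref{thm:main}, the recursive form of the signed ruler sequence from Section~\ref{sec:ruler_signed}, and the correctness and worst-case $O(1)$-per-step bookkeeping of Knuth's loopless mixed-radix Gray code framework, on which Algorithms~\ref{alg:ruler} and~\ref{alg:Gray} are built. With the bases $\mathbf{b} = n, n{-}1, \ldots, 1, 2, 2, \ldots, 2$ and the $\mathbf{fns}$ described just above the theorem, I would show that $\AlgGrayS{\mathbf{b}, \mathbf{s}, \mathbf{fns}}$ emits exactly $\twisted{n}$ in worst-case $O(1)$ time per object.

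First I would verify that the indices produced by $\AlgRulerS{\mathbf{b}}$ encode the correct sequence of twists. The signed ruler sequence factors naturally along the split of $\mathbf{b}$: the factorial prefix $n, n{-}1, \ldots, 1$ alone yields the signed downstairs ruler sequence already identified in Table~\ref{tab:orders} as the change sequence of plain changes, with sign giving swap direction and value giving which symbol to move. Interpreting each $\pm j$ of this prefix via the corresponding $2$-twist in $\mathbf{fns}$ therefore traces one row of Figure~\ref{fig:global} as plain changes, lifted to signed permutations via the consistent identification in Equation~\ref{eq:2twistedPermuto}. The binary suffix $2, 2, \ldots, 2$ drives a BRGC over the $n$ signs, and the recursive reflect-and-complement rule in the definition of $\sruler{\cdot}$ interleaves exactly one sign-flip ($1$-twist) between successive length-$n!$ blocks, with the chosen symbols and the alternating left-to-right versus right-to-left direction of each block matching the boustrophedon pattern of Theorem~\ref{thm:main}. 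A short induction on the number of binary bases completes this step.

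Next I would invoke the correctness and looplessness of the ruler-sequence routines themselves. Algorithm~\ref{alg:ruler} is a direct adaptation of Knuth's Algorithm~M for reflected mixed-radix Gray codes; its invariants on $\mathbf{a}$, $\mathbf{f}$, and $\mathbf{d}$ give worst-case $O(1)$ work per yield and emit exactly $\sruler{\mathbf{b}}$. Algorithm~\ref{alg:Gray} differs only in applying $\mathbf{fns}[d_j \cdot j]$ to $\mathbf{s}$ after each update, so once each $\mathbf{fns}$ call is itself $O(1)$, the overall control flow remains loopless.

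The main obstacle is therefore implementing each of the candidate $\mathbf{fns}$ in worst-case $O(1)$. The subtlety is that the Gray-code index $j$ names a \emph{symbol} rather than its current position in the signed permutation, while a twist acts on positions. Borrowing the standard Johnson--Trotter device, I would maintain, alongside the array $\mathbf{s}$ of signed symbols, a second array giving the current position of each unsigned value $v \in [n]$. Evaluating $\mathbf{fns}[\pm j]$ then reduces to looking up the position of the relevant symbol, swapping it with its left or right neighbor and flipping both signs (for a $2$-twist) or flipping only its sign (for a $1$-twist), and updating at most two entries of the inverse array. Each of these primitive operations is $O(1)$, so the overall algorithm is loopless, establishing Theorem~\ref{thm:loopless}.
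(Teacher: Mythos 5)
Your proposal is correct and follows essentially the same route the paper intends: run $\AlgGrayS{\mathbf{b},\mathbf{s},\mathbf{fns}}$ with bases $n,n{-}1,\ldots,1,\,2,\ldots,2$, identify the factorial portion of the signed ruler sequence with the plain-changes $2$-twists and the binary portion with the interleaved $1$-twists via the global structure of Theorem~\ref{thm:main}, and inherit looplessness from Knuth's Algorithm~M provided each $\mathbf{fns}$ call is $\bigO{1}$. The paper itself leaves this argument implicit (the theorem is justified only by the construction in Section~\ref{sec:loopless} and the appendix code), and your added detail of maintaining a value-to-position inverse array so that symbol-indexed twists run in worst-case $\bigO{1}$ time is exactly the device the implementation relies on.
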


A 1-page Python implementation of twisted plain changes appears in the Appendix, and it loosely follows to structure of Algorithms \ref{alg:ruler}--\ref{alg:Gray}.

\bibliographystyle{splncs04}
\bibliography{refs}

\begin{thebibliography}{10}
\providecommand{\url}[1]{\texttt{#1}}
\providecommand{\urlprefix}{URL }
\providecommand{\doi}[1]{https://doi.org/#1}

\bibitem{cameron2023hamiltonicity}
Cameron, B., Sawada, J., Therese, W., Williams, A.: Hamiltonicity of k-sided pancake networks with fixed-spin: Efficient generation, ranking, and optimality. Algorithmica  \textbf{85}(3),  717--744 (2023)

\bibitem{corbett1992rotator}
Corbett, P.F.: Rotator graphs: An efficient topology for point-to-point multiprocessor networks. IEEE Transactions on Parallel \& Distributed Systems  \textbf{3}(05),  622--626 (1992)

\bibitem{duato2003interconnection}
Duato, J., Yalamanchili, S., Ni, L.: Interconnection networks. Morgan Kaufmann (2003)

\bibitem{duckworth1668tintinnalogia}
Duckworth, R., Stedman, F.: Tintinnalogia: Or, The Art of Ringing. London (1668)

\bibitem{Ehrlich73}
Ehrlich, G.: Loopless algorithms for generating permutations, combinations, and other combinatorial configurations. J. {ACM}  \textbf{20}(3),  500--513 (1973)

\bibitem{fertin2009combinatorics}
Fertin, G., Labarre, A., Rusu, I., Vialette, S., Tannier, E.: Combinatorics of genome rearrangements. MIT press (2009)

\bibitem{Ganapathi23}
Ganapathi, P., Chowdhury, R.: A unified framework to discover permutation generation algorithms. The Computer Journal  \textbf{66}(3),  603--614 (2023)

\bibitem{gardner1972curious}
Gardner, M.: Curious properties of {G}ray code and how it can be used to solve puzzles. Scientific American  \textbf{227}(2), ~106 (1972)

\bibitem{goldstein1964computer}
Goldstein, A.: A computer oriented algorithm for generating permutations. Bell Telephone Laboratories (Internal Memorandum MM-64-1271-3), Murray Hill, NJ  (1964)

\bibitem{goldstein1964sequential}
Goldstein, A., Graham, R.: Sequential generation by transposition of all the arrangements of n symbols. Bell Telephone Laboratories (Internal Memorandum MM-64-1271-3 / MM-64-1213-12), Murray Hill, NJ p.~14 (1964)

\bibitem{gray1953pulse}
Gray, F.: Pulse code communication. United States Patent Number 2632058  (1953)

\bibitem{HannenhalliP95}
Hannenhalli, S., Pevzner, P.A.: Transforming cabbage into turnip: polynomial algorithm for sorting signed permutations by reversals. In: Proc. of the 27th Annual {ACM} Symposium on Theory of Computing (STOC 95). pp. 178--189. {ACM} (1995)

\bibitem{heath1972origins}
Heath, F.: Origins of the binary code. Scientific American  \textbf{227}(2),  76--83 (1972)

\bibitem{hindenburg1796sammlung}
Hindenburg, C.F.: Sammlung combinatorisch-analytischer Abhandlungen, vol.~1. ben Gerhard Fleischer dem Jungern (1796)

\bibitem{holroyd2012shorthand}
Holroyd, A.E., Ruskey, F., Williams, A.: Shorthand universal cycles for permutations. Algorithmica  \textbf{64},  215--245 (2012)

\bibitem{johnson1963generation}
Johnson, S.M.: Generation of permutations by adjacent transposition. Mathematics of computation  \textbf{17}(83),  282--285 (1963)

\bibitem{knuth2013art}
Knuth, D.E.: Art of Computer Programming, Volume 4, Fascicle 4, The: Generating All Trees--History of Combinatorial Generation. Addison-Wesley (2013)

\bibitem{korsh2011loopless}
Korsh, J., LaFollette, P., Lipschutz, S.: A loopless implementation of a {G}ray code for signed permutations. Publications de l'Institut Mathematique  \textbf{89}(103),  37--47 (2011)

\bibitem{liptak23constant}
Liptak, Z., Masillo, F., Navarro, G., Williams, A.: Constant time and space updates for the sigma-tau problem. In: Proc. of the 30th International Symposium on String Processing and Information Retrieval (SPIRE 2023). p. 6 pages. {Springer} (2023)

\bibitem{mutze2022combinatorial}
M{\"u}tze, T.: Combinatorial {G}ray codes-an updated survey. arXiv preprint arXiv:2202.01280  (2022)

\bibitem{ord1970generation}
Ord-Smith, R.: Generation of permutation sequences: part 1. The Computer Journal  \textbf{13}(2),  152--155 (1970)

\bibitem{ruskey2003combinatorial}
Ruskey, F.: Combinatorial generation. Preliminary working draft. University of Victoria, Victoria, BC, Canada  \textbf{11}, ~20 (2003)

\bibitem{savage1997survey}
Savage, C.: A survey of combinatorial {G}ray codes. SIAM review  \textbf{39}(4),  605--629 (1997)

\bibitem{sawada2016greedy}
Sawada, J., Williams, A.: Greedy flipping of pancakes and burnt pancakes. Discrete Applied Mathematics  \textbf{210},  61--74 (2016)

\bibitem{sedgewick1977permutation}
Sedgewick, R.: Permutation generation methods. ACM Computing Surveys (CSUR)  \textbf{9}(2),  137--164 (1977)

\bibitem{stedman1677campanalogia}
Stedman, F.: Campanalogia: or the Art of Ringing Improved, With plain and easie rules to guide the Practitioner in the Ringing all kinds of Changes, To Which is added, great variety of New Peals. London (1677)

\bibitem{steinhaus1979one}
Steinhaus, H.: One hundred problems in elementary mathematics. Courier Corporation (1979)

\bibitem{stigler1980stigler}
Stigler, S.M.: Stigler's law of eponymy. Transactions of the New York academy of sciences  \textbf{39}(1 Series II),  147--157 (1980)

\bibitem{suzuki2005hamilton}
Suzuki, Y., Sawada, N., Kaneko, K.: Hamiltonian cycles and paths in burnt pancake graphs. In: Proceedings of the ISCA 18th International Conference on Parallel and Distributed Computing Systems. pp. 85--90 (2005)

\bibitem{trotter1962algorithm}
Trotter, H.F.: Algorithm 115: perm. Communications of the ACM  \textbf{5}(8),  434--435 (1962)

\bibitem{williams20101}
Williams, A.: O (1)-time unsorting by prefix-reversals in a boustrophedon linked list. In: Fun with Algorithms: 5th International Conference, FUN 2010, Ischia, Italy, June 2-4, 2010. Proceedings 5. pp. 368--379. Springer (2010)

\bibitem{williams2013greedy}
Williams, A.: The greedy {G}ray code algorithm. In: Algorithms and Data Structures: 13th International Symposium, WADS 2013, London, ON, Canada, August 12-14, 2013. Proceedings 13. pp. 525--536. Springer (2013)

\bibitem{zaks1984new}
Zaks, S.: A new algorithm for generation of permutations. BIT Numerical Mathematics  \textbf{24}(2),  196--204 (1984)

\end{thebibliography}

\appendix

\newpage

\section{Python Implementation}
\label{sec:Python}

\inputminted{python}{programs/plainSigned.py}

\end{document}